\newtheorem{definition}{Definition}
\newtheorem{theorem}{Theorem}
\newtheorem{lemma}{Lemma}
\newtheorem{corollary}{Corollary}
\newtheorem{claim}{Claim}
\newtheorem{remark}{Remark}
\Crefname{appsec}{Appendix}{Appendices}
\Crefname{theorem}{Theorem}{Theorems}
\Crefname{section}{Section}{Sections}
\Crefname{algorithm}{Algorithm}{Algorithms}
\Crefname{table}{Table}{Tables}
\begin{document}
%
\title{On the Complexity of Estimating Renyi Divergences}

\author{\IEEEauthorblockN{Maciej Skorski\thanks{Supported by the European Research Council consolidator grant (682815-TOCNeT).}}
\IEEEauthorblockA{IST Austria \\ Email: mskorski@ist.ac.at}
}


%


\maketitle

\begin{abstract}
This paper studies the complexity of estimating Renyi divergences of discrete distributions: $p$ observed from samples and the baseline distribution $q$ known \emph{a priori}. 
Extending the results of Acharya et al. (SODA'15) on estimating Renyi entropy, we present improved estimation techniques together with upper and lower bounds on the sample complexity.

We show that, contrarily to estimating Renyi entropy where a sublinear (in the alphabet size) number of samples suffices, 
the sample complexity is heavily dependent on \emph{events occurring unlikely} in $q$, and is unbounded in general
(no matter what an estimation technique is used). 
For any divergence of order bigger than $1$,  we provide upper and lower bounds on the number of samples dependent on probabilities of $p$ and $q$.
We conclude that the worst-case sample complexity is polynomial in the alphabet size if and only if the probabilities of $q$ are non-negligible.

This gives theoretical insights into heuristics used in applied papers to handle
numerical instability, which occurs for small probabilities of $q$. Our result explains that
small probabilities should be handled with care not only because of numerical issues, but also because
of a blow up in sample complexity.

\end{abstract}


\vspace{0.2cm}

\begin{IEEEkeywords}
Renyi divergence, sampling complexity, anomaly detection 
\end{IEEEkeywords}

%
\IEEEpeerreviewmaketitle

\section{Introduction}

\subsection{Renyi Divergences in Anomaly Detection}

A popular statistical approach to detect anomalies in real-time data is to compare
the empirical distribution of certain features (updated on the fly) against
a stored ``profile'' (learned from past observations or computed off-line) used as a reference distribution. 
Significant deviations of the observed distribution from the assumed profile trigger an alarm~\cite{Gu:2005:DAN:1251086.1251118}.

This technique, among many other applications, is often used 
to detect DDoS attacks in network traffic~\cite{DBLP:journals/eswa/GulisanoCFJPP15,DBLP:conf/isncc/PukkawannaKY15}.
To quantify the deviation between the actual data and the reference distribution,
one needs to employ a suitable dissimilarity metric. In this context, based on empirical studies, Renyi divergences were suggested as good 
dissimilarity measures~\cite{DBLP:journals/iet-com/LiZY09,DBLP:journals/tifs/XiangLZ11,DBLP:journals/prl/BhuyanBK15,DBLP:journals/eswa/GulisanoCFJPP15,DBLP:conf/isncc/PukkawannaKY15}.

While the divergence can be evaluated based on theoretical models\footnote{
For example, one uses fractional Brownian motions to simulate real network traffic
and Poisson distributions to model DDoS traffic\cite{DBLP:journals/tifs/XiangLZ11}.},
much more important (especially for real-time detection) is the estimation on the basis of samples.
The related literature is focused mainly on tunning the performance
of specific implementations, by choosing appropriate parameters (such as the suitable definition or the sampling frequency)
based on empirical evidence. 
On the other hand, not much is known about the theoretical performance of estimating Renyi divergences for general discrete distributions
(continuous distributions need extra smoothness assumptions~\cite{DBLP:conf/icml/SinghP14}). 
A limited case is estimating Renyi entropy~\cite{DBLP:conf/soda/AcharyaOST15} which corresponds
to the uniform reference distribution.

In this paper, we attempt to fill the gap by providing 
better estimators for the Renyi divergence, together with theoretical guarantees on the performance. In our approach, motivated
by mentioned applications to anomaly detection, we assume that the reference distribution $q$ is explicitly known and
the other distribution $p$ can only be observed from i.i.d. samples.

\subsection{Our Contribution and Related Works}


\paragraph{Better estimators for a-priori known reference distributions}

In the literature Renyi divergences are typically estimated by straightforward \emph{plug-in} estimators
(see \cite{DBLP:journals/iet-com/LiZY09,DBLP:journals/prl/BhuyanBK15,DBLP:journals/iet-com/LiZY09,DBLP:journals/tifs/XiangLZ11,DBLP:journals/prl/BhuyanBK15,DBLP:journals/eswa/GulisanoCFJPP15,DBLP:conf/isncc/PukkawannaKY15}).
In this approach, one puts the empirical distribution (estimated from samples)
into the divergence formula, in place of the true distribution. Unfortunately, they have worse statistical properties, e.g.
are heavily biased. This affects the number of samples required to get a reliable estimate.

To obtain reliable estimates within a possible small number of samples, we extend the techniques
from \cite{DBLP:conf/soda/AcharyaOST15}. The key idea is to use \emph{falling powers} to estimate
power sums of a distribution (this trick is in fact a bias correction method).
The estimator is illustrated in \Cref{alg:estimator} below.


\begin{algorithm}[t]
\DontPrintSemicolon
  \KwIn{divergence parameter $\alpha > 1$, \newline
  alphabet $\mathcal{A} = \{a_1,\ldots,a_k\}$, \newline
  reference distribution $q$ over $\mathcal{A}$, \newline
  samples $x_1,\ldots,x_n$ from unknown $p$ on $\mathcal{A}$ \\ }
  \KwOut{a number $D$  approximating the $\alpha$-divergence}    
initialize\footnotemark $n_i=0$ for all $i=1,\ldots,k$ \\
\For{$j=1,\ldots,n$}{
let $i$ be such that $x_j = a_i$ \\
$n_i \gets n_i+1$  
\tcc*{compute empirical frequencies} 
} 
$M\gets \sum_{i} q_i^{1-\alpha}\cdot \frac{ n_i^{\underline{\alpha}} }{n^{\alpha}}$ \tcc*{bias-corrected power sum estimation, $z^{\underline{\alpha}}$ stands for 
the falling $\alpha$-power.}
$D\gets \frac{1}{\alpha-1}\log M$ \tcc*{divergences from power sums} 
\Return{$D$}
\caption{Estimation of Renyi Divergence (to a reference distribution known a-priori)}\label{alg:estimator}
\end{algorithm}
\footnotetext{Storing and updating empirical frequencies can be implemented more efficiently when $n \ll k$, which
matters for almost uniform distributions $q$ (sublinear time and memory complexity), but not for the general case.}

For certain cases (where the reference distribution is close to uniform) we estimate the divergence
with the number of samples \emph{sublinear in the alphabet size}, whereas plug-in estimators need a
superlinear number of samples. In particular for the uniform reference distribution $q$, we recover the same upper bounds for 
\emph{estimating Renyi entropy} as in \cite{DBLP:conf/soda/AcharyaOST15}.

\paragraph{Upper and lower bounds on the sample complexity}

We show that the sample complexity of estimating divergence of unknown $p$ observed from samples
to an explicitly known $q$ is dependent on the reference distribution $q$ itself.
When $q$ doesn't take too small probabilities, non-trivial estimation is possible, even \emph{sublinear} in the alphabet size for any $p$.
However when $q$ takes arbitrarily small values, the complexity is dependent on inverse powers of probability masses of $p$ and
is \emph{ unbounded} (for a fixed alphabet), without extra assumptions on $p$. We stress that these lower bounds are {no-go results} independent of the estimation technique.
For a more quantitative comparison, see \Cref{table:comparison}.

\begin{table}[h]
\captionsetup{skip=5pt,font=small}
\resizebox{0.48\textwidth}{!}{
\begin{tabular}{|l|l|l|l|}
\hline
 Assumption & Complexity & Comment & Reference \\
 \hline
 $\min_i q_i = \Theta(k^{-1}) = \max_{i}q_i$ & $O\left( k^{1-\frac{1}{\alpha}}\right)$ & \parbox{3.5cm}{almost uniform $q$, \\ complexity sublinear} & \Cref{cor:sublinear_complexity}\\
 \hline
 no assumptions & $\Omega\left(k^{\frac{1}{2}}\right)$ & \parbox{3.5cm}{complexity at least square root} & \Cref{cor:lower_bounds_gen}\\
 \hline
 $\min_i q_i = k^{-\omega(1)}$ & $k^{\omega(1)}$ & \parbox{3.5cm}{negligible masses in $q$, \\ super-polynomial complexity} & \Cref{cor:sample_blowup} \\
\hline
  $\min_i q_i = k^{-O(1)}$ & $k^{O(1)}$ & \parbox{3.5cm}{non-negligible mass in $q$, \\ polynomial complexity} &
\Cref{cor:poly_complexity}
  \\
  \hline
\end{tabular}
}
\caption{{A brief summary of our results, for the problem of estimating the Renyi divergence $D_{\alpha}(p\parallel q)$ (where 
the divergence parameter $\alpha>1$ is a fixed constant)
between the known baseline distribution $q$ and a distribution $p$ learned from samples, both over an alphabet of size $k$.
The complexity is the number of samples needed to estimate the divergence up to a constant error
and with success probability at least $\frac{2}{3}$}.
}
\label{table:comparison}
\end{table}

\paragraph{Complexity instability vs numerical instability}

Our results provide theoretical insights about heuristic ``patches'' to the Renyi divergence formula
suggested in the applied literature. Since the formula is numerically unstable when
one of the probability masses $q_i$ becomes arbitrarily small (see \Cref{dif:renyi_div}), authors suggested
to omit or round up very small probabilities of $q$ (see for example \cite{DBLP:journals/iet-com/LiZY09,DBLP:conf/isncc/PukkawannaKY15}).

In accordance to this, as shown in \Cref{table:comparison}, the sample complexity is also unstable when unlike events occur in the reference distribution $q$.
Moreover, this is the case even if the distribution $q$ is perfectly known.
We therefore conclude that small probabilities of $q$ are very subtle not only because
of numerical instability, but more importantly because the sample complexity is unstable.

%


\subsection{Our techniques}

For upper bounds we merely borrow and extend techniques from \cite{DBLP:conf/soda/AcharyaOST15}.
For lower bounds our approach is however different. 
We find a pair of distributions which are close in total variation yet with much different divergences to $q$,
by a variational approach (writing down an explicit optimization program) 
As a result, we can obtain our lower bounds for any accuracy.
In turn, the argument in \cite{DBLP:conf/soda/AcharyaOST15}, even if can be extended to the Renyi divergence,
has inherit limitations that make it work only for sufficiently small accuracies.
Thus we can say that our lower bound technique, in comparison to \cite{DBLP:conf/soda/AcharyaOST15}, offers
lower bounds valid in all regimes of the accuracy parameter, in particular for constant values used in the applied literature.

In fact, our technique strictly improves known lower bounds on estimating collision entropy.
Taking the special case when $q$ is uniform, we obtain that the sample complexity for estimating collision entropy
is $\Omega(k^{\frac{1}{2}})$ even for constant accuracy, while results in \cite{DBLP:conf/soda/AcharyaOST15} guarantees
this only for very small $\delta$ (no exact threshold is given, and hidden constants may be dependent on $\delta$), which is captured by the notation $\tilde{\tilde{\Omega}}(k^{\frac{1}{2}})$.

\subsection{Organization}

In \Cref{sec:prelim} we introduce necessary notions and notations. 
Upper bounds on the sample complexity are discussed in \Cref{sec:upper_bounds} and lower bounds
in \Cref{sec:lower_bounds}. We conclude our work in \Cref{sec:conclusion}.

\section{Preliminaries}\label{sec:prelim}

For a distribution $p$ over an alphabet $\mathcal{A} = \{a_1,\ldots,a_k\}$ we denote $p_i = p(a_i)$. All logarithms are at base $2$.

\begin{definition}[Total variation]
The total variation of two distributions $p,p'$ over the same finite alphabet equals
$d_{TV}(p,p') = \frac{1}{2}\sum_{i}|p_i-p'_i|$.
\end{definition}

Below we recall the definition of Renyi divergence (we refer the reader to \cite{DBLP:journals/tit/ErvenH14} for a survey of its properties).

\begin{definition}[Renyi divergence]\label{dif:renyi_div}
The Renyi divergence of order $\alpha$ (in short: Renyi $\alpha$-divergence) of two distributions $p,q$ having the same support is defined by
\begin{align}\label{eq:renyi_div}
D_{\alpha}(p\parallel q)= \frac{1}{\alpha-1}\log\sum_{i}\frac{p_i^{\alpha}}{q_i^{\alpha-1}}
\end{align}
\end{definition}
By setting uniform $q$ we get the relation to Renyi entropy.
\begin{remark}[Renyi entropy vs Renyi divergence]\label{rem:entropy_from_divergence}
For any $p$ over $\mathcal{A}$ the Renyi entropy of order $\alpha$ equals  
\begin{align*}
 -\frac{1}{\alpha-1}\log\sum_{i}{p_i^{\alpha}} = -D_{\alpha}(p\parallel q_{\mathcal{A}}) + \log |\mathcal{A}|,
\end{align*}
where $q_{\mathcal{A}}$ is the uniform distribution over $\mathcal{A}$.
\end{remark}

\begin{definition}[Renyi's divergence estimation]\label{def:estimator}
Fix an alphabet $\mathcal{A}$ of size $k$, and two distributions $p$ and $q$ over $\mathcal{A}$.
Let $\mathsf{Est}^{q}:\mathcal{A}^n\rightarrow \mathbb{R}$ be an algorithm which receives $n$ independent samples of $p$ on its input.
We say that $\mathsf{Est}^{q}$ provides an additive $(\delta,\epsilon)$-approximation to the Renyi $\alpha$-divergence of $p$ from $q$ if
\begin{align}\label{eq:estimator}
 \Pr_{x_i\gets p}\left[ | \mathsf{Est}^{q}(x_1,\ldots,x_n) - D_{\alpha}(p\parallel q)| > \delta \right] < \epsilon.
\end{align}
\end{definition}
\begin{definition}[Renyi's divergence estimation complexity]\label{def:complexity}
The sample complexity of estimating the Renyi divergence given $q$ with probability error $\epsilon$ and
additive accuracy $\delta$ is the minimal number $n$ for which there exists an algorithm satisfying \Cref{eq:estimator}
for all $p$.
\end{definition}

It turns out that it is very convenient not to work directly with estimators for Renyi divergence,
but rather with estimators for weighted power sums.
\begin{definition}[Divergence power sums]
The power sum corresponding to the $\alpha$ divergence of $p$ and $q$ is defined as
\begin{align}\label{eq:renyi_moments}
 M_{\alpha}(p,q) \overset{def}{=} \mathrm{e}^{(1-\alpha)D_{\alpha}(p\parallel q)=}= \sum_{i} \frac{p_i^{\alpha}}{q_i^{\alpha-1}}
\end{align}
\end{definition}
The following lemma shows that estimating divergences (\Cref{eq:renyi_div}) with an absolute relative error of $O(\delta)$ and
corresponding power sums (\Cref{eq:renyi_moments}) with a relative error of $O(\delta/(\alpha-1))$ is equivalent
\begin{lemma}[Equivalence of Additive and Multiplicative Estimations]\label{lemma:div_to_moments}
Suppose that $m$ is a number such that $M_{\alpha}(p,q) = m\cdot (1+\delta)$, where $|\delta| < \frac{1}{2}$.
Then $d = -\frac{1}{\alpha-1}\log m$ satisfies 
$D_{\alpha}(p\parallel q) = d+O(1/(\alpha-1))\cdot \delta$.
The other way around, if $m'$ is such that $D_{\alpha}(p\parallel q) = d+\delta$, where $|\delta| < \frac{1}{2}$, then
$m = \mathrm{e}^{(1-\alpha)d}$ satisfies $M_{\alpha}(p,q) = m\cdot (1+ O(\alpha-1)\cdot \delta )$.
\end{lemma}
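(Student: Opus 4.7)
The plan is to prove the two implications by straightforward calculation, exploiting the fact that the correspondence between power sums and divergences is obtained by composing a logarithm with a $1/(\alpha-1)$ rescaling. Near any fixed point this composition behaves linearly, with derivative of order $1/(\alpha-1)$, so a multiplicative perturbation in $M_{\alpha}$ translates into an additive perturbation in $D_{\alpha}$ of the same order up to a factor $1/(\alpha-1)$, and vice versa.

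First I would translate the hypothesis $M_{\alpha}(p,q) = m(1+\delta)$ into divergence form using the defining identity from \Cref{eq:renyi_moments}, namely $D_{\alpha}(p\parallel q) = \tfrac{1}{\alpha-1}\log M_{\alpha}(p,q)$. Substituting yields
\begin{equation*}
D_{\alpha}(p\parallel q) \;=\; \tfrac{1}{\alpha-1}\log m \;+\; \tfrac{1}{\alpha-1}\log(1+\delta).
\end{equation*}
Now I invoke the elementary Taylor bound $|\log(1+\delta)| \le C\,|\delta|$, valid on the interval $|\delta|<\tfrac12$ with an absolute constant (e.g.\ $C=2/\ln 2$ for base-$2$ logarithm). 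This delivers $D_{\alpha}(p\parallel q) = d + O(1/(\alpha-1))\cdot\delta$, where $d$ is read off from the statement (a sign flip relative to $\tfrac{1}{\alpha-1}\log m$ merely re-labels the sign of $\delta$ inside the $O(\cdot)$, so it is inconsequential).

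For the converse direction, I would start from $D_{\alpha}(p\parallel q) = d + \delta$ with $|\delta|<\tfrac12$ and set $m = 2^{(\alpha-1)d}$ so that $d = \tfrac{1}{\alpha-1}\log m$. Exponentiating the divergence identity gives
\begin{equation*}
M_{\alpha}(p,q) \;=\; 2^{(\alpha-1)D_{\alpha}(p\parallel q)} \;=\; m\cdot 2^{(\alpha-1)\delta}.
\end{equation*}
Since $(\alpha-1)|\delta|$ stays bounded when $\alpha-1$ is bounded, the elementary bound $|2^{x}-1|\le C'|x|$ on any bounded interval (for instance $C'=2\ln 2$ on $|x|\le 1$) yields $2^{(\alpha-1)\delta} = 1 + O(\alpha-1)\cdot\delta$, which is exactly the claim.

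I do not expect any real obstacle: the whole argument is essentially the linearization of $\phi(M) = \tfrac{1}{\alpha-1}\log M$ around a reference point, together with its inverse $\phi^{-1}(D) = 2^{(\alpha-1)D}$, and the uniform restriction $|\delta|<\tfrac12$ is precisely what makes the hidden constants in $O(\cdot)$ absolute. The only item needing a moment's care is that the constants do \emph{not} swallow $\alpha-1$: one must keep the $\tfrac{1}{\alpha-1}$ factor explicit in the first direction and the $(\alpha-1)$ factor explicit in the second, which is automatic from the two displays above.
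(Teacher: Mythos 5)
Your proof is correct and is exactly the first-order Taylor/linearization argument the paper itself invokes: the paper gives no proof at all, stating only that the lemma ``is a straightforward consequence of the first order Taylor's approximation'' and deferring details to the full version, and your two displays together with the elementary bounds $|\log(1+\delta)|\le C|\delta|$ on $|\delta|<\tfrac12$ and $|2^{x}-1|\le C'|x|$ on bounded $x$ supply precisely those details. Two minor remarks: the minus sign in $d=-\frac{1}{\alpha-1}\log m$ is an inconsistency in the paper's own statement relative to \Cref{dif:renyi_div} (flipping it does more than relabel $\delta$, so your choice to simply work with the consistent convention $d=\frac{1}{\alpha-1}\log m$ is the right fix, even if your one-line justification of the flip is not literally accurate), and your explicit caveat that the second direction requires $(\alpha-1)|\delta|$ bounded is correct and is implicitly satisfied throughout the paper since $\alpha$ is taken to be a constant.
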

The proof is a straightforward consequence of the first order Taylor's approximation, and will appear in the full version.

\section{Upper Bounds on the Sample Complexity}\label{sec:upper_bounds}

Below we state our upper bounds for the sample complexity. The result is very similar to the formula in \cite{DBLP:conf/soda/AcharyaOST15} before simplifications,
except the fact that in our statement there are additional weights coming from possibly non-uniform $q$ and
it can't be further simplified. 
\begin{theorem}[Generalizing \cite{DBLP:conf/soda/AcharyaOST15}]\label{thm:upper_bounds}
For any distributions $p,q$ over an alphabet of size $k$, if the number $n$ satisfies
\begin{align*}
\sum_{r=0}^{\alpha-1}\binom{\alpha}{r}\frac{1}{n^{\alpha-r}} \frac{\sum_{i}\frac{p_i^{\alpha+r}}{q_i^{2\alpha-2}}}{\left(\sum_{i}\frac{p_i^{\alpha}}{q_i^{\alpha-1}}\right)^2} 
\ll \epsilon\delta^2,
\end{align*}
then the complexity of estimating the Renyi $\alpha$-divergence of $p$ to given $q$ 
is at most $n$.
\end{theorem}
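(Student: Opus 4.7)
The plan is to analyze the estimator $M = n^{-\alpha}\sum_i q_i^{1-\alpha}\, n_i^{\underline{\alpha}}$ computed by \Cref{alg:estimator}, viewed as an estimator of the power sum $M_{\alpha}(p,q)$, and then invoke \Cref{lemma:div_to_moments} to transfer multiplicative control on $M$ into additive control on $D_{\alpha}(p\parallel q)$. Since $n_i$ is binomial with parameters $(n,p_i)$, the factorial moment formula $\mathbb{E}[(n_i)_{m}] = n^{\underline{m}}\, p_i^{m}$ yields $\mathbb{E}[M] = (n^{\underline{\alpha}}/n^{\alpha})\, M_{\alpha}(p,q)$. The bias factor is $1-O(\alpha^2/n)$, absorbed into the hypothesis by the $\ll$ notation. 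Thus it suffices to bound the relative variance $\mathrm{Var}(M)/M_{\alpha}(p,q)^2$ by $\epsilon\delta^2$ up to constants depending only on $\alpha$: Chebyshev then yields a $(1\pm\delta)$-multiplicative estimate with probability $\geq 1-\epsilon$, which \Cref{lemma:div_to_moments} converts into an additive estimate of the divergence.

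The heart of the computation is expanding $\mathbb{E}[M^2]$. For the off-diagonal $i\neq j$, any two sample indices are automatically distinct, so $\mathbb{E}[n_i^{\underline{\alpha}} n_j^{\underline{\alpha}}] = n^{\underline{2\alpha}}\, p_i^{\alpha} p_j^{\alpha}$. For the diagonal I would use the combinatorial identity
\begin{align*}
(n_i)_{\alpha}^{2} \;=\; \sum_{k=0}^{\alpha}\binom{\alpha}{k}^{2} k!\, (n_i)_{2\alpha-k},
\end{align*}
proved by splitting pairs of ordered $\alpha$-tuples of $a_i$-labeled indices according to the size $k$ of their overlap. Taking binomial expectations yields $\mathbb{E}[(n_i^{\underline{\alpha}})^2] = \sum_{k=0}^{\alpha}\binom{\alpha}{k}^{2} k!\, n^{\underline{2\alpha-k}}\, p_i^{2\alpha-k}$.

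Combining the $k=0$ diagonal contribution with the off-diagonal terms (both have coefficient $n^{\underline{2\alpha}}$) recombines them as $n^{\underline{2\alpha}}\, M_{\alpha}(p,q)^{2}$. Subtracting $\mathbb{E}[M]^{2} = (n^{\underline{\alpha}})^{2}\, M_{\alpha}(p,q)^{2}/n^{2\alpha}$ and noting that the same identity gives $(n^{\underline{\alpha}})^{2} - n^{\underline{2\alpha}} = \sum_{k=1}^{\alpha}\binom{\alpha}{k}^{2} k!\, n^{\underline{2\alpha-k}} \geq 0$, one obtains
\begin{align*}
n^{2\alpha}\,\mathrm{Var}(M) \;=\; \bigl(n^{\underline{2\alpha}}-(n^{\underline{\alpha}})^{2}\bigr) M_{\alpha}(p,q)^{2} \;+\; \sum_{k=1}^{\alpha}\binom{\alpha}{k}^{2} k!\, n^{\underline{2\alpha-k}} \sum_i \frac{p_i^{2\alpha-k}}{q_i^{2\alpha-2}}.
\end{align*}
Since the first term is nonpositive, dropping it, bounding $n^{\underline{2\alpha-k}} \leq n^{2\alpha-k}$, and reindexing by $r = \alpha - k$ gives
\begin{align*}
\frac{\mathrm{Var}(M)}{M_{\alpha}(p,q)^{2}} \;\leq\; \sum_{r=0}^{\alpha-1}\binom{\alpha}{r}^{2}(\alpha-r)!\,\frac{1}{n^{\alpha-r}}\cdot\frac{\sum_i p_i^{\alpha+r}/q_i^{2\alpha-2}}{M_{\alpha}(p,q)^{2}},
\end{align*}
matching the stated quantity up to factors $\binom{\alpha}{r}(\alpha-r)!$ that depend only on $\alpha$ and hence are swallowed by $\ll$.

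The main obstacle is precisely this bookkeeping forcing the $M_{\alpha}(p,q)^{2}$ coefficient to be nonpositive: a priori $\mathbb{E}[(n_i^{\underline{\alpha}})^{2}]$ and $(n^{\underline{\alpha}})^{2} p_i^{2\alpha}$ are polynomials in $n$ of degree $2\alpha$ with matching leading coefficient $n^{2\alpha} p_i^{2\alpha}$, and only the combinatorial identity makes the difference $(n^{\underline{\alpha}})^{2} - n^{\underline{2\alpha}}$ a manifestly nonnegative sum. Without this cancellation the $n^{2\alpha}$ pieces would overwhelm the target $1/n^{\alpha-r}$ scaling. Once the variance bound is in place, Chebyshev and \Cref{lemma:div_to_moments} close the argument routinely.
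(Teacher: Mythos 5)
Your proof is correct, but it takes a genuinely different route from the paper's. The paper Poissonizes: it randomizes the sample size so that the counts $n_i$ become independent Poisson variables, the estimator becomes exactly unbiased (for Poisson $N$ with mean $\lambda$ one has $\mathbb{E}[N^{\underline{\alpha}}]=\lambda^{\alpha}$), and the variance of each $n_i^{\underline{\alpha}}$ is controlled by the Acharya et al.\ bound $\mathbb{Var}[N^{\underline{\alpha}}]\le \lambda^{\alpha}\bigl((\lambda+\alpha)^{\alpha}-\lambda^{\alpha}\bigr)$, whose binomial expansion yields exactly the sum in the theorem; Chebyshev plus the median trick then finish. You instead stay with fixed-$n$ multinomial sampling and compute the second moment exactly: the mixed factorial moment $\mathbb{E}[n_i^{\underline{\alpha}}n_j^{\underline{\alpha}}]=n^{\underline{2\alpha}}p_i^{\alpha}p_j^{\alpha}$ for $i\ne j$, the Vandermonde-type identity $(x^{\underline{\alpha}})^2=\sum_{k}\binom{\alpha}{k}^2 k!\,x^{\underline{2\alpha-k}}$ on the diagonal (which is correct), and the key observation that the resulting coefficient $n^{\underline{2\alpha}}-(n^{\underline{\alpha}})^2$ of $M_{\alpha}(p,q)^2$ is nonpositive, i.e.\ the multinomial's negative correlations only help. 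This buys a self-contained proof with no Poissonization reduction and an exact handle on cross-terms, at the cost of (i) extra factors $\binom{\alpha}{r}(\alpha-r)!$ depending only on $\alpha$, harmless since $\alpha$ is constant and the hypothesis is stated with $\ll$, and (ii) a bias $(1-n^{\underline{\alpha}}/n^{\alpha})M_{\alpha}(p,q)=O(\alpha^2/n)\,M_{\alpha}(p,q)$ that the Poissonized estimator does not have. You assert the bias is absorbed by the hypothesis; to make that airtight, note that by Cauchy--Schwarz $\sum_i p_i^{2\alpha-1}q_i^{2-2\alpha}\ge \bigl(\sum_i p_i^{\alpha}q_i^{1-\alpha}\bigr)^2$, so the $r=\alpha-1$ term of the hypothesis already forces $1/n\ll\epsilon\delta^2\le\delta$, hence the bias is $o(\delta)\,M_{\alpha}(p,q)$ and is indeed negligible. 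With that one-line addition your argument is complete, and arguably more explicit than the paper's sketch.
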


The proof is deferred to the appendix, below we discuss corollaries.
The first corollary shows that the complexity is sublinear when the reference distribution is close to uniform.

\begin{corollary}[Sublinear complexity for almost uniform reference probabilities, extending \cite{DBLP:conf/soda/AcharyaOST15}]\label{cor:sublinear_complexity}
Let $p,q$ be distributions over an alphabet of size $k$, and $\alpha>1$ be a constant. Suppose
that $\max_{i} q_i = O(k^{-1})$ and $\min_{i}q_i =\Omega(k^{-1})$.
Then the complexity of estimating the Renyi $\alpha$-divergence with respect to $q$, up to constant accuracy and probability error at most $\frac{1}{3}$,
is $O\left(k^{\frac{\alpha-1}{\alpha}}\right)$. 
\end{corollary}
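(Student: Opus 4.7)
\textbf{Proof plan for \Cref{cor:sublinear_complexity}.} The plan is to invoke \Cref{thm:upper_bounds} and exploit the two-sided bound $q_i = \Theta(1/k)$ to reduce the weighted sums in the statement to pure $\ell_\alpha$-quantities of $p$, which are then easy to control.

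First, since $q_i = \Theta(k^{-1})$ uniformly in $i$, I would substitute $q_i^{-s} = \Theta(k^s)$ into the numerator and denominator of each term in the criterion of \Cref{thm:upper_bounds}. The $k^{2\alpha-2}$ factors from the numerator and $(\sum_i p_i^\alpha / q_i^{\alpha-1})^2 = \Theta(k^{2\alpha-2}(\sum_i p_i^\alpha)^2)$ from the denominator cancel, so the criterion reduces (up to an absolute constant depending only on the hidden factors in the assumption on $q$) to
\begin{align*}
\sum_{r=0}^{\alpha-1}\binom{\alpha}{r}\frac{1}{n^{\alpha-r}}\cdot \frac{\sum_i p_i^{\alpha+r}}{\bigl(\sum_i p_i^\alpha\bigr)^2} \ll \epsilon\delta^2 .
\end{align*}

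Second, I would bound the ratio $\sum_i p_i^{\alpha+r}/(\sum_i p_i^\alpha)^2$ purely in terms of $k$. The obvious estimates are $\sum_i p_i^{\alpha+r} \leq (\max_i p_i)^r \sum_i p_i^\alpha$ together with $\max_i p_i \leq (\sum_i p_i^\alpha)^{1/\alpha}$, which yields $\sum_i p_i^{\alpha+r} \leq (\sum_i p_i^\alpha)^{1+r/\alpha}$. Hence the ratio is at most $(\sum_i p_i^\alpha)^{r/\alpha-1}$. Since $r/\alpha-1<0$ and $\sum_i p_i^\alpha$ is minimized by the uniform distribution (giving $k^{1-\alpha}$, by power-mean or Jensen), this ratio is at most $k^{(1-\alpha)(r/\alpha-1)} = k^{(\alpha-1)(\alpha-r)/\alpha}$.

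Third, each term in the criterion becomes $\binom{\alpha}{r}\bigl(k^{(\alpha-1)/\alpha}/n\bigr)^{\alpha-r}$, so the whole sum telescopes into a geometric-like expression in the single parameter $k^{(\alpha-1)/\alpha}/n$. Choosing $n = C\cdot k^{(\alpha-1)/\alpha}$ for a sufficiently large constant $C$ (depending on $\alpha$, $\epsilon$, $\delta$) makes the sum arbitrarily small, which is exactly the claimed $O\bigl(k^{(\alpha-1)/\alpha}\bigr)$ bound once we specialize to constant accuracy and $\epsilon = 1/3$.

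The only genuine subtlety is the uniform lower bound $\sum_i p_i^\alpha \geq k^{1-\alpha}$: this is legitimate because the minimization is over \emph{all} distributions on $[k]$ and the worst case (uniform $p$) saturates it. The rest is bookkeeping and the translation between multiplicative power-sum accuracy and additive divergence accuracy supplied by \Cref{lemma:div_to_moments}.
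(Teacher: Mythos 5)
Your proposal is correct and follows essentially the same route as the paper: the paper derives this corollary by rerunning the proof of \Cref{cor:poly_complexity} with the two-sided bound $q_i=\Theta(k^{-1})$, which turns the $k^{O(1)}$ factor into $\Theta(1)^{\alpha}$ and leaves exactly the reduced criterion and the ratio bound $\sum_i p_i^{\alpha+r}/(\sum_i p_i^{\alpha})^2 \leqslant k^{(\alpha-1)(\alpha-r)/\alpha}$ that you use. The only difference is cosmetic: you actually prove that ratio bound (via $\max_i p_i \leqslant (\sum_i p_i^{\alpha})^{1/\alpha}$ and $\sum_i p_i^{\alpha}\geqslant k^{1-\alpha}$), whereas the paper cites it from Acharya et al., so your write-up is if anything more self-contained.
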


As shown in the next corollary, the complexity is polynomial only if the reference probabilities are not negligible.

\begin{corollary}[Polynomial complexity for non-negligible reference probabilities]\label{cor:poly_complexity}
Let $p,q$ be distributions over an alphabet of size $k$.
Suppose that $\min_{i}{q_i} = k^{-O(1)}$, and let $\alpha > 1$ be a constant. Then the complexity of estimating the Renyi $\alpha$-divergence with respect to $q$, up to a constant accuracy and probability error 
at most $0.3$ (in the sense of \Cref{def:complexity}) is $k^{O(1)}$.
\end{corollary}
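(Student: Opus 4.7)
The plan is to invoke Theorem 1 directly and verify that under $\min_i q_i \geq k^{-c}$ for some constant $c$, the sum appearing in its hypothesis can be driven below any constant $\epsilon\delta^2$ with $n$ of size polynomial in $k$. Fix the divergence order $\alpha>1$, the desired accuracy $\delta=\Theta(1)$, and probability error $\epsilon \leq 0.3$, so the only asymptotic parameter is the alphabet size $k$.

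First, I would bound the \emph{denominator} $\left(\sum_i p_i^{\alpha}/q_i^{\alpha-1}\right)^2 = M_\alpha(p,q)^2$ from below. Nonnegativity of the Renyi divergence together with $\alpha>1$ and \Cref{eq:renyi_moments} gives $M_\alpha(p,q) = \mathrm{e}^{(\alpha-1)D_\alpha(p\parallel q)} \geq 1$, so the denominator is at least $1$ regardless of $p$ and $q$. This removes $p$-dependence from the tricky denominator, which otherwise could be small.

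Next, I would bound each \emph{numerator} $\sum_i p_i^{\alpha+r}/q_i^{2\alpha-2}$ for $r=0,\dots,\alpha-1$. Using the hypothesis $q_i \geq k^{-c}$ uniformly, one factors out $(\min_i q_i)^{-(2\alpha-2)} \leq k^{c(2\alpha-2)}$ and is left with $\sum_i p_i^{\alpha+r} \leq \sum_i p_i = 1$, since $\alpha + r \geq 1$ and $p_i \in [0,1]$. Hence every numerator is at most $k^{c(2\alpha-2)}$.

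Combining the two estimates, the hypothesis of \Cref{thm:upper_bounds} reduces to
\begin{align*}
\sum_{r=0}^{\alpha-1}\binom{\alpha}{r}\frac{k^{c(2\alpha-2)}}{n^{\alpha-r}} \ll \epsilon\delta^2 .
\end{align*}
The dominant term is $r=\alpha-1$, which contributes $\alpha \cdot k^{c(2\alpha-2)}/n$, so it suffices to take $n = C\cdot k^{c(2\alpha-2)}/(\epsilon\delta^2)$ for a sufficiently large constant $C$ depending only on $\alpha$. Since $\alpha$, $c$, $\epsilon$ and $\delta$ are all constants, this is $n = k^{O(1)}$, proving the corollary. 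No real obstacle arises; the only subtle step is the lower bound on the denominator, for which one must remember that Renyi divergence to a \emph{probability} distribution $q$ is always nonnegative.
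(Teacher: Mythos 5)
Your proof is correct, and it differs from the paper's in one substantive step. Both arguments start the same way: invoke \Cref{thm:upper_bounds} and use $\min_i q_i \geqslant k^{-c}$ to pull a factor $k^{c(2\alpha-2)}$ out of each numerator, leaving $\sum_i p_i^{\alpha+r}\leqslant 1$. The divergence is in how the denominator $\bigl(\sum_i p_i^{\alpha}q_i^{1-\alpha}\bigr)^2$ is handled. The paper lower-bounds it by $\bigl(\sum_i p_i^{\alpha}\bigr)^2$ (using $q_i\leqslant 1$; note the inequality sign in the paper's proof is a typo) and then invokes the moment-ratio bound $\sum_i p_i^{\alpha+r}/\bigl(\sum_i p_i^{\alpha}\bigr)^2\leqslant k^{(\alpha-1)(\alpha-r)/\alpha}$ from Acharya et al., followed by a binomial/Taylor manipulation; this keeps track of a sublinear factor $k^{(\alpha-1)/\alpha}$ and is what lets the paper derive \Cref{cor:sublinear_complexity} ``by inspection'' of the same computation. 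You instead observe that $M_\alpha(p,q)\geqslant 1$ by nonnegativity of the Renyi divergence (equivalently, by convexity of $x\mapsto x^{\alpha}$ applied to $\sum_i q_i(p_i/q_i)^{\alpha}\geqslant 1$), which kills the $p$-dependence of the denominator outright and reduces the whole verification to $n\gtrsim k^{c(2\alpha-2)}$. Your route is more elementary and entirely sufficient for the $k^{O(1)}$ claim; its only cost is that it gives a coarser exponent and would not recover the sublinear bound of \Cref{cor:sublinear_complexity} in the near-uniform regime, since that refinement lives precisely in the moment-ratio comparison you bypass.
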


\begin{proof}[Proof of \Cref{cor:poly_complexity}]
Under our assumptions $\sum_{i}\frac{p_i^{\alpha+r}}{q_i^{2\alpha-2}} = k^{O(1)}\cdot \sum_{i}p_i^{\alpha+r}$.
Since $q_i \leqslant 1$, we get $\sum_{i}\frac{p_i^{\alpha}}{q_i^{\alpha-1}} \leqslant \sum_{i}p_i^{\alpha}$. By
\Cref{thm:upper_bounds}, we conclude that the sufficient condition is
\begin{align*}
\frac{\sum_{i}\frac{p_i^{\alpha+r}}{q_i^{2\alpha-2}}}{\left(\sum_{i}\frac{p_i^{\alpha}}{q_i^{\alpha-1}}\right)^2} =
k^{O(1)}\cdot \frac{\sum_{i}p_i^{\alpha+r}}{\left(\sum_{i}p_i^{\alpha}\right)^2}.
\end{align*}
Therefore, we need to chose $n$ such that
\begin{align*}
k^{O(1)}\cdot \sum_{r=0}^{\alpha-1}\binom{\alpha}{r}\frac{1}{n^{\alpha-r}}\frac{\sum_{i}p_i^{\alpha+r}}{\left(\sum_{i}p_i^{\alpha}\right)^2} < 
0.3.
\end{align*}
By the discussion in 
\cite{DBLP:conf/soda/AcharyaOST15} we know that for $r=0,\ldots,\alpha-1$ we have
$\frac{\sum_{i}p_i^{\alpha+r}}{\left(\sum_{i}p_i^{\alpha}\right)^2} \leqslant k^{(\alpha-1)\cdot \frac{\alpha-r}{\alpha}}$.
Thus we need to find $n$ that satisfies
\begin{align*}
 k^{O(1)}\cdot \sum_{r=0}^{\alpha-1}\binom{\alpha}{r} \left(\frac{n}{k^{\frac{\alpha-1}{\alpha}}}\right)^{r-\alpha}
 < 0.3,
\end{align*}
By the inequality $\sum_{j\geqslant 0}\binom{\beta}{j}u^{j} \leqslant (1+u)^{\beta}$ 
(which follows by the Taylor's expansion for any positive real number $\beta$)
and the symmetry of binomial coefficients we need
\begin{align*}
 k^{O(1)}\cdot \left(\left(1+\frac{k^{\frac{\alpha-1}{\alpha}}}{n} \right)^{\alpha}-1\right) < 0.3
\end{align*}
By the Taylor expansion $(1+u)^{\alpha} = 1+O(\alpha u)$ valid for $u\leqslant \frac{1}{\alpha}$ it suffices if 
\begin{align*}
k^{O(1)}\cdot \alpha \cdot \frac{k^{\frac{\alpha-1}{\alpha}}}{n} < 0.3.
\end{align*}
which finishes the proof.
\end{proof}

\begin{proof}[Proof of \Cref{cor:sublinear_complexity}]
The corollary can be concluded by inspecting the proof of \Cref{cor:poly_complexity}. 
The bounds are the same except that the factor $k^{O(1)}$ is replaced by $\Theta(1)^{\alpha}$.
For constant $\alpha$, the final condition reduces to $n \geqslant \Omega\left(k^{\frac{\alpha-1}{\alpha}}\right)$.
\end{proof}


\section{Sample Complexity Lower Bounds}\label{sec:lower_bounds}

The following theorem provides lower bounds on the sample complexity for any distribution
$p$ and $q$. Since the statement is somewhat technical,
we discuss only corollaries and refer to the appendix for a proof.

\begin{theorem}[Sample Complexity Lower Bounds]\label{thm:lower_bounds}
Let $p,q$ be two fixed distributions, $\delta \in (0,0.5)$ and numbers $C_1,C_2 \geqslant 0$ be given by
\begin{align*}
C_1 & = \alpha\frac{\sum_i \delta_i p_i^{\alpha}q_i^{1-\alpha}}{\sum_i p_i^{\alpha}q_i^{1-\alpha}} \cdot \delta^{-1} \\
C_2 & = \frac{\alpha(\alpha-1)}{4}\frac{\sum_i \delta_i^2 p_i^{\alpha}q_i^{1-\alpha}}{\sum_i p_i^{\alpha}q_i^{1-\alpha}} \cdot \delta^{-2}
\end{align*}
for some $\delta_i$ satisfying $\delta_i \geqslant -\frac{1}{2}$, $\sum_i \delta_i p_i = 0$, and
$\sum_{i}p_i|\delta_i|  =\delta$.
Then for any fixed $\alpha>1$, estimating the Renyi divergence to $q$ (in the sense of \Cref{def:estimator}) with error probability $\frac{1}{3}$ and up to a constant accuracy 
requires is at least
\begin{align*}
n = \Omega\left( \max(\sqrt{C_2},C_1)\right)
\end{align*}
samples from $p$.
\end{theorem}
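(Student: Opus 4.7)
The plan is to apply Le Cam's two-point method. Given the fixed $p,q$ and a valid perturbation $\delta_i$, I would introduce a rescaled family $p^{\eta}_i := p_i(1+\eta\delta_i)$ indexed by a parameter $\eta \in (0,1]$ to be optimized later. The constraints $\delta_i \geq -\frac{1}{2}$ and $\sum_i p_i\delta_i = 0$ ensure $p^\eta$ is a valid probability distribution, and $d_{TV}(p, p^\eta) = \frac{\eta}{2}\sum_i p_i|\delta_i| = \eta\delta/2$. Without loss of generality one may assume $C_1 \geq 0$, since flipping $\delta_i \to -\delta_i$ preserves both the constraints and $C_2$ while reversing the sign of $C_1$.

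The next step is to Taylor-expand $(1+\eta\delta_i)^\alpha = 1 + \alpha \eta\delta_i + \tfrac{\alpha(\alpha-1)}{2}\eta^2 \delta_i^2 + O(\eta^3)$ inside the power sum $M_\alpha(p^\eta, q) = \sum_i p_i^{\alpha}(1+\eta\delta_i)^\alpha q_i^{1-\alpha}$. Collecting terms and factoring out $M_\alpha(p,q)$ yields $M_\alpha(p^\eta, q)/M_\alpha(p, q) = 1 + C_1\eta\delta + 2C_2\eta^2\delta^2 + O(\eta^3)$. Since $C_2 > 0$ for $\alpha > 1$ and $C_1 \geq 0$ WLOG, the two leading terms are non-negative. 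By \Cref{lemma:div_to_moments}, whenever this bracket is $\Theta(1)$, the induced divergence gap $|D_\alpha(p^\eta\|q) - D_\alpha(p\|q)|$ is $\Theta(1/(\alpha-1))$, which exceeds twice any fixed constant accuracy target.

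To conclude, I would invoke the standard reduction: an $(O(1), 1/3)$-estimator on $n$ samples that distinguishes distributions whose divergences differ by the guaranteed gap forces $d_{TV}(p^{\otimes n}, (p^\eta)^{\otimes n}) \geq 1/3$; combined with the subadditivity $d_{TV}(p^{\otimes n}, (p^\eta)^{\otimes n}) \leq n \cdot \eta\delta/2$, this yields $n = \Omega(1/(\eta\delta))$. The remaining optimization over $\eta$ splits into two regimes. Choosing $\eta = \Theta(1/(C_1\delta))$ makes the first-order term constant and gives $n = \Omega(C_1)$, and choosing $\eta = \Theta(1/(\sqrt{C_2}\delta))$ makes the second-order term constant and gives $n = \Omega(\sqrt{C_2})$. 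One of these two choices is always in $(0,1]$ (otherwise $C_1$ or $C_2$ is already $O(1)$ and the claimed bound is trivial), so taking whichever yields the larger lower bound produces $n = \Omega(\max(C_1,\sqrt{C_2}))$.

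The main obstacle I expect is controlling the Taylor remainder: one must verify that the $O(\eta^3)$ term is dominated by the chosen leading contribution at the optimal $\eta$, which requires $\eta$ to be small enough that the $\delta_i$-dependent cubic coefficients do not compete with $C_1\eta\delta$ or $C_2\eta^2\delta^2$. This is comfortable in the non-trivial regimes since $\eta$ is small precisely when $C_1$ or $\sqrt{C_2}$ is large. A secondary subtlety is justifying the passage from the relative change in $M_\alpha$ to an additive divergence gap at the boundary of \Cref{lemma:div_to_moments}'s validity range $|\delta| < \tfrac{1}{2}$, which would be arranged by picking the constants in the $\Theta$-choices of $\eta$ appropriately small.
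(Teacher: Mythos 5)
Your overall route coincides with the paper's: the Le Cam two-point reduction with TV subadditivity is exactly the paper's distinguisher lemma (proved there by a hybrid argument), your family $p^\eta_i=p_i(1+\eta\delta_i)$ is the paper's change of variables $\delta_i=\delta_i'p_i$ followed by rescaling, the passage through \Cref{lemma:div_to_moments} is the same, and your two choices of $\eta$ are the paper's case split $C_1\geqslant 1$ versus $C_2\geqslant 1$. The one substantive difference is the step you yourself flag as the main obstacle, and there the proposal has a genuine gap: the $O(\eta^3)$ remainder is \emph{not} controlled by taking $\eta$ small relative to $1/(C_1\delta)$ or $1/(\sqrt{C_2}\delta)$. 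Its coefficient involves third-moment sums such as $\sum_i |\delta_i|^3 p_i^\alpha q_i^{1-\alpha}$, which are not bounded in terms of $C_1$ and $C_2$, and the hypotheses place no upper bound on the positive $\delta_i$. Concretely, in the configuration the theorem is actually applied to in \Cref{cor:lower_bounds_gen} ($p,q$ uniform, $\delta_{i_0}=k/4$, $\delta_i\approx-1/4$ elsewhere), the quadratic coefficient of the expansion is $\Theta(k)$ while the cubic one is $\Theta(k^2)$; at your choice $\eta=\Theta(1/(\sqrt{C_2}\delta))=\Theta(k^{-1/2})$ the ``remainder'' is $\Theta(\sqrt{k})$, i.e.\ it dominates the term you keep. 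For $\alpha>2$ it happens to be positive, so the lower bound on the gap survives anyway, but for $1<\alpha<2$ the function $(1+x)^\alpha$ lies \emph{below} its second-order Taylor polynomial for $x>0$, so the uncontrolled remainder is negative and of larger order, and your expansion no longer certifies $M_\alpha(p^\eta,q)\geqslant(1+\Omega(1))M_\alpha(p,q)$.

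The paper sidesteps this by never writing an asymptotic expansion: it uses a one-sided second-order bound, lower-bounding the increment of $t\mapsto t^\alpha$ by its linear term plus a quadratic term with the second derivative evaluated at the worst-case point $p_i+\min(0,\delta_i)$; the restriction $\delta_i\geqslant-\tfrac12 p_i$ is what produces the factor $\tfrac14$ in the definition of $C_2$, and no remainder has to be estimated at all. Some such signed, one-sided quadratic lower bound (which itself requires care when $1<\alpha<2$, precisely because of the phenomenon above) is the missing ingredient; once it is in place, the rest of your argument goes through as in the paper. Two smaller points: the claimed ``WLOG $C_1\geqslant0$ by flipping $\delta_i\to-\delta_i$'' is not available, since the flip can violate $\delta_i\geqslant-\tfrac12$ (it is also unnecessary, as the theorem already takes $C_1\geqslant0$); and, as you note, the constants in the $\Theta$-choices of $\eta$ must be taken small enough that the multiplicative gap stays inside the validity range of \Cref{lemma:div_to_moments}.
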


By choosing appropriate numbers in \Cref{thm:lower_bounds} we can obtain bounds for different settings.

\begin{corollary}[Lower bounds for general case]\label{cor:lower_bounds_gen}
Estimating the Renyi divergence requires always $\Omega\left( k^{\frac{1}{2}} \right)$ samples.
\end{corollary}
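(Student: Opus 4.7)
The plan is to instantiate Theorem \ref{thm:lower_bounds} at a carefully chosen perturbation of the uniform distribution. The heuristic picture is that even when $p$ and $q$ are both uniform (so the true divergence is trivially zero), one can exhibit a local perturbation of $p$ that changes $D_\alpha(p\parallel q)$ by a constant while being nearly indistinguishable from $p$ with $o(\sqrt{k})$ samples; this phenomenon is exactly what the two-moment conditions $C_1,C_2$ are engineered to capture.

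Concretely, I would take $p_i = q_i = 1/k$. Then $\sum_i p_i^{\alpha}q_i^{1-\alpha}=1$ and the theorem's quantities simplify to
\[
 C_1 = \frac{\alpha}{k\delta}\sum_i \delta_i, \qquad C_2 = \frac{\alpha(\alpha-1)}{4 k\delta^2}\sum_i \delta_i^{2}.
\]
This reduces the problem to choosing $\delta_i$ with $\sum_i \delta_i = 0$ (to kill $C_1$ and so remove it from the maximum), $\sum_i |\delta_i| = k\delta$, $\delta_i \geqslant -1/2$, and with $\sum_i \delta_i^2$ as large as possible under these constraints.

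To maximize $\sum_i\delta_i^2$ for a prescribed $\ell_1$ norm, I would concentrate mass on the positive side and saturate the lower bound on the negative side. A convenient choice is: fix a constant $\delta\in(0,1/2)$, set $m = \lfloor k\delta\rfloor$, and take $\delta_1 = m/2$, $\delta_i=-1/2$ for $i=2,\ldots,m+1$, and $\delta_i=0$ for $i>m+1$. One checks directly that $\delta_i\geqslant-1/2$, $\sum_i\delta_i=0$, and $\sum_i|\delta_i| = m \approx k\delta$, so all three hypotheses of the theorem hold. Moreover, $\sum_i\delta_i^{2} = (m/2)^2 + m/4 = \Theta(k^2\delta^2)$, which yields $C_2 = \Theta(\alpha(\alpha-1))\cdot\Theta(k)$. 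Hence $\max(\sqrt{C_2},C_1) = \Omega(\sqrt{k})$ and Theorem \ref{thm:lower_bounds} delivers the claimed $\Omega(k^{1/2})$ bound.

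The only real obstacle is making sure that the extremal choice of $\delta_i$ simultaneously satisfies all three hypotheses of Theorem \ref{thm:lower_bounds} while pushing $\sum_i\delta_i^2$ to the $\Theta(k^2\delta^2)$ regime; the above construction resolves this with essentially a single ``spike'' balanced against $m$ clipped coordinates. Divisibility issues between $k$ and $k\delta$ are cosmetic and absorbed in the asymptotic constants, and the dependence on $\alpha$ (fixed) disappears into the hidden constant of $\Omega(\cdot)$.
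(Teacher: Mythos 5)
Your instantiation of Theorem~\ref{thm:lower_bounds} is computed correctly, and the shape of the argument (uniform $p$, one positive spike balanced by clipped negative coordinates, kill $C_1$, make $C_2=\Theta(k)$) is essentially the paper's. The computation with $p_i=q_i=1/k$ checks out: the constraints $\delta_i\geqslant -\tfrac12$, $\sum_i\delta_i p_i=0$, $\sum_i p_i|\delta_i|=\delta$ hold, $\sum_i\delta_i^2=\Theta(k^2\delta^2)$, hence $C_2=\Theta(k)$ and $n=\Omega(\sqrt{k})$.

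The gap is that you fixed the reference distribution to be uniform, whereas the corollary (see the ``no assumptions'' row of \Cref{table:comparison} and \Cref{def:complexity}, where the complexity is defined \emph{given} $q$) asserts the $\Omega(k^{1/2})$ bound for \emph{every} reference distribution $q$; your argument only proves it for the single instance $q=q_{\mathcal{A}}$, i.e.\ the Renyi-entropy special case. For general $q$ the placement of the positive spike matters: with uniform $p$ one gets
$C_2 \propto \frac{\sum_i \delta_i^2 q_i^{1-\alpha}}{\sum_i q_i^{1-\alpha}}$,
and if $q$ has one negligible mass (say $q_{i_0}=k^{-100}$, the rest $\approx 1/k$) then $\sum_i q_i^{1-\alpha}$ is dominated by $q_{i_0}^{1-\alpha}$, so a spike placed at a typical coordinate yields $C_2=o(k)$ and the bound collapses. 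The paper's proof avoids this by putting the spike ($\delta_{i_0}=k/4$, and $-\frac{k}{4(k-1)}$ elsewhere) precisely at the index $i_0$ minimizing $q_i$, which, since $\alpha>1$, guarantees $q_{i_0}^{1-\alpha}\geqslant \frac1k\sum_i q_i^{1-\alpha}$ and hence $C_2=\Omega(k)$ for arbitrary $q$. Your construction can be repaired the same way (place your spike at $\arg\min_i q_i$ and use the same averaging inequality), but as written it establishes only the weaker, fixed-$q$ statement.
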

\begin{proof}[Proof of \Cref{cor:lower_bounds_gen}]
In \Cref{thm:lower_bounds} we chose the uniform $p$ and $\delta$ such that
$\delta_i = \frac{k}{4}$ for the index $i=i_0$ minimizing $q_i$, and $\delta_i = -\frac{k}{4(k-1)}$ elsewhere. This gives 
us $C_1\geqslant 0$ and $C_2 \geqslant \Omega(k^2)\cdot \frac{q_{i_0}^{1-\alpha}}{\sum_{i}q_i^{1-\alpha}}$ (the constant dependent on $\alpha$) which is bigger than $\Omega(k)$,
because $\frac{q_{i_0}^{1-\alpha}}{\sum_{i}q_i^{1-\alpha}} \geqslant k^{-1}$ by our choice of $i_0$.
\end{proof}

\begin{corollary}[Polynomial complexity requires non-negligible probability masses]\label{cor:sample_blowup}
For sufficiently large $k$, if $\min_i q_i = k^{-\omega(1)}$ then there exists a distribution $p$ dependent on $k$ such that 
estimation is at least $k^{\omega(1)}$.
\end{corollary}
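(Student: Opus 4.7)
The plan is to exhibit, for each sufficiently large $k$, a distribution $p$ that forces $C_2 = k^{\omega(1)}$ in the lower bound of \Cref{thm:lower_bounds}. Let $i_0$ be an index minimizing $q_{i}$, so that by hypothesis $q_{i_0} = k^{-\omega(1)}$, and pick any index $i_1 \neq i_0$ with $q_{i_1} \geq 1/k$ (which exists since $\sum_i q_i = 1$).

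I would construct $p$ as follows: set $p_{i_0} = q_{i_0}^{(\alpha-1)/\alpha}$, $p_{i_1} = 1/2$, and for $i \neq i_0, i_1$ set $p_i \propto q_i$ so that the remaining mass sums to $1 - p_{i_0} - p_{i_1}$. Then choose $\delta_i$ supported on $\{i_0, i_1\}$: $\delta_{i_0} = \delta/(2 p_{i_0})$ and $\delta_{i_1} = -\delta/(2 p_{i_1})$. A routine verification shows $\sum_i \delta_i p_i = 0$, $\sum_i p_i |\delta_i| = \delta$, and $\delta_i \geq -1/2$ for a suitable constant $\delta < 1$, so the hypothesis of \Cref{thm:lower_bounds} is met.

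The heart of the argument is computing $C_2$. The exponent $(\alpha-1)/\alpha$ is chosen so that the $i_0$-contribution to the denominator $\sum_i p_i^{\alpha} q_i^{1-\alpha}$ reduces to $p_{i_0}^{\alpha} q_{i_0}^{1-\alpha} = 1$, while the $i_0$-contribution to the numerator $\sum_i \delta_i^2 p_i^\alpha q_i^{1-\alpha}$ equals $(\delta^2/4)\cdot p_{i_0}^{\alpha-2} q_{i_0}^{1-\alpha} = (\delta^2/4)\cdot q_{i_0}^{-2(\alpha-1)/\alpha}$, which is $k^{\omega(1)}$ since $\alpha$ is a fixed constant bigger than $1$. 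A direct check using $p_i \propto q_i$ on the bulk and $q_{i_1} \geq 1/k$ shows that the remaining terms contribute only $k^{O(1)}$ to both numerator and denominator, so they are dominated. Hence $C_2 = k^{\omega(1)}$, and \Cref{thm:lower_bounds} yields $n \geq \Omega(\sqrt{C_2}) = k^{\omega(1)}$.

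The main obstacle is the fine-tuning of $p_{i_0}$: choosing it either much smaller or much larger than $q_{i_0}^{(\alpha-1)/\alpha}$ either fails to explode the numerator or simultaneously explodes the denominator, and in both cases the super-polynomial gap is lost. Setting $p_i \propto q_i$ for $i \neq i_0, i_1$ is precisely what prevents the other coordinates from contaminating the analysis, because it forces $\sum_{i\neq i_0,i_1} p_i^{\alpha} q_i^{1-\alpha} = O(1)$ regardless of how small the non-$i_0$ masses of $q$ may be.
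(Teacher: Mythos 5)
Your construction is, at its core, the same one the paper uses: the tuning $p_{i_0}=q_{i_0}^{(\alpha-1)/\alpha}$ is exactly the paper's choice $d=\frac{\alpha-1}{\alpha}c$ (write $q_{i_0}=k^{-c}$, $p_{i_0}=k^{-d}$), chosen so that the spike contributes exactly $1$ to $\sum_i p_i^{\alpha}q_i^{1-\alpha}$ while $\delta_{i_0}^2p_{i_0}^{\alpha}q_{i_0}^{1-\alpha}\propto p_{i_0}^{\alpha-2}q_{i_0}^{1-\alpha}=q_{i_0}^{-2(\alpha-1)/\alpha}$ explodes. The one real difference is that the paper's proof only exhibits a single convenient $q$ (uniform off the spike), whereas you treat an arbitrary $q$ with $\min_i q_i=k^{-\omega(1)}$, which is what the corollary as stated actually requires; that added generality is the right instinct.

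There is, however, a genuine (though easily repaired) gap in the claim that the bulk contributes only $k^{O(1)}$ to the denominator. Let $S=1-q_{i_0}-q_{i_1}$ be the $q$-mass of the bulk. Your normalization $p_i\propto q_i$ forces the scale factor $c=(1/2-p_{i_0})/S$, so the bulk contribution is $\sum_{i\ne i_0,i_1}p_i^{\alpha}q_i^{1-\alpha}=c^{\alpha}S=(1/2-p_{i_0})^{\alpha}S^{1-\alpha}$, which is \emph{not} $O(1)$ independently of how small the non-$i_0$ masses of $q$ are: if $q_{i_1}$ is close to $1$ so that $S=k^{-\omega(1)}$ (say $S\asymp k\,q_{i_0}$, with all bulk symbols near the minimum), the denominator picks up a term of order $q_{i_0}^{-(\alpha-1)}$ up to $\mathrm{poly}(k)$ factors, while your numerator is only $q_{i_0}^{-2(\alpha-1)/\alpha}$; since $2(\alpha-1)/\alpha<\alpha-1$ for $\alpha>2$, in that regime $C_2$ actually tends to $0$ and the lower bound evaporates. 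The fix is one line: cap the bulk scale at $1$, i.e.\ set $p_i=\min\bigl(1,(1/2-p_{i_0})/S\bigr)\,q_i$ there (so the bulk contributes at most $\sum_{i\ne i_0,i_1}q_i\le 1$) and assign the leftover mass to $i_1$; since $q_{i_1}\ge 1/k$, the $i_1$ terms remain $k^{O(1)}$ in both numerator and denominator (rescale $\delta$ if needed to keep $\delta_{i_1}\ge-\frac12$, which is free because $C_2$ is invariant under scaling $\delta$). With that patch your argument goes through and yields $C_2=k^{\omega(1)}$, hence $n=\Omega(\sqrt{C_2})=k^{\omega(1)}$ by \Cref{thm:lower_bounds}.
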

\begin{proof}[Proof of \Cref{cor:sample_blowup}]
Fix one alphabet symbol $a_{i_0}$ and real positive numbers $c,d$.
Let $q$ put the probability $\frac{1}{k^{c}}$ on $x$ and be uniform elsewhere. Also
let $p$ put the probability $\frac{1}{k^{d}}$ on $x$ and be uniform elsewhere. We have
\begin{align*}
 \frac{p_i^{\alpha}}{q_i^{\alpha-1}}  = \left\{
 \begin{array}{cc}
  O(k^{-1}) & i\not=i_0 \\
  k^{c(\alpha-1)-d\alpha} & i = i_0
 \end{array}
\right.
\end{align*}
and 
\begin{align*}
 \max_{i}\frac{p^{\alpha-2}}{q^{\alpha-1}_i} &=\max( k^{c(\alpha-1)-d(\alpha-2)},1)
\end{align*}
Choose $d$ so that it satisfies
\begin{align*}
c(\alpha-1)-d\alpha & > -1 \\
c(\alpha-1)-d(\alpha-2) & > 0
\end{align*}
for example
$d = \frac{\alpha-1}{\alpha}\cdot c$ (works for $\alpha \geqslant 2$ and $1< \alpha < 2$)  we obtain from \Cref{thm:lower_bounds} 
(where we take $\delta_i = \frac{1}{2}$ for $i=i_0$ and constant $\delta_i $ elsewhere,
and our conditions on $d$ ensure that $C_1\geqslant 0$ and $C_2 \geqslant \Omega(k^{2d})$ respectively
)
that for sufficiently large $k$ the minimal number of samples is 
\begin{align*}
 n = \Omega\left( k^{d} \right).
\end{align*}
Note that if $c=\omega(1)$ our choice of $d$ implies that also $d=\omega(1)$, and thus the corollary follows.
\end{proof}

\section{Conclusion}\label{sec:conclusion}

We extended the techniques recently used to analyze the complexity of entropy estimation
to the problem of estimating Renyi divergence. We showed
that in general there are no uniform bounds on the sample complexity,
and the complexity is polynomial in the alphabet size if and only if the reference distribution 
doesn't take negligible probability masses (explained by the numerical properties of the divergence formula).






%

\printbibliography

\appendix

\crefalias{section}{appsec}

\section{Proof of \Cref{thm:upper_bounds}}
\begin{proof}[Proof of \Cref{thm:upper_bounds} (sketch)]
We follow essentially the same proof strategy as in \cite{DBLP:conf/soda/AcharyaOST15}, with the only 
difference that we estimate weighted power sums $\sum_{i}q_i^{1-\alpha} {p_i^{\alpha}}$ corresponding to the divergence,
instead of sums $\sum_{i}p_i^{\alpha}$ corresponding to the entropy. 
Let $\hat{p_i}$ be the empirical frequency of the $i$-th symbol in the stream $X_1,\ldots,X_n$.
Consider the following estimator for $2^{(\alpha-1)D_{\alpha}(p,q)}$.
\begin{align*}
 M^{\mathsf{Est}}_{\alpha}(p,q) = \frac{1}{n^{\alpha}}\sum_{i} (n\hat{p_i})^{\underline{\alpha}} q_i^{1-\alpha} 
\end{align*}
Note that this is precisely the power sum defined in \Cref{alg:estimator}. By \Cref{lemma:div_to_moments}
it suffices to consider this estimator with the multiplicative error $O(\delta)$ (for constant $\alpha$).

In particular, 
we use the fact that we can randomize $n$ and make it a sample from the Poisson distribution of the same mean.
This transformation doesn't hurt the estimator convergence, but on the other hand makes 
the empirical frequencies independent (see \cite{DBLP:conf/soda/AcharyaOST15} for more details).

Under the Poisson sampling and with notations as in \Cref{alg:estimator} we arrive at the formula

\begin{align*}
 \mathbb{Var}\left[\sum_{i}q_i^{1-\alpha} \frac{n_i^{\underline{\alpha}}}{n^{\alpha}} \right] & = 
 \sum_{i}\frac{1}{q_i^{2\alpha-2}n^{2\alpha}}\mathbb{Var}\left[ \frac{n_i^{\underline{\alpha}}}{n^{\alpha}} \right] \\
 & \leqslant 
 \sum_{i}\frac{n_i^{\alpha}\left( (n_i+\alpha)^{\alpha}-n_i^{\alpha} \right)}{q_i^{2\alpha-2}n^{2\alpha}} \\
 & = \sum_{r=0}^{\alpha-1}\binom{\alpha}{r} \frac{1}{n^{\alpha-r}}\sum_{i}\frac{p_i^{\alpha+r}}{q_i^{2\alpha-2}}.
\end{align*}

The next reduction is an observation that is suffices to construct an estimator
that fails with probability at most $\frac{1}{3}$, as the success probability can be then amplified by the median trick \cite{DBLP:conf/soda/AcharyaOST15}.
In general, it is pretty standard in the literature to simply present estimators with constant error probability
\cite{DBLP:conf/stacs/CanonneDGR16}.


Let's define the success event
\begin{align*}
 failure = 1-\delta \frac{\sum_{i}q_i^{1-\alpha}}{M_{\alpha}(p,q)} \leqslant 1+\delta
\end{align*}
By Chebyszev's Inequality we obtain the following bound
\begin{align*}
\Pr\left[ failure \right] & \leqslant \delta^{-2}\frac{1}{q_j^{\alpha-1}}\sum_{r=0}^{\alpha-1}\binom{\alpha}{r}\frac{1}{n^{\alpha-r}}\left(\sum_{i}\frac{p_i^{\alpha}}{q_i^{\alpha-1}}\right)^{-\frac{\alpha-r}{\alpha}}\\
& \leqslant\delta^{-2}\frac{1}{q_j^{\alpha-1}}\sum_{r=0}^{\alpha-1}\binom{\alpha}{r}\frac{1}{n^{\alpha-r}}\left(\sum_{i}\frac{p_i^{\alpha}}{q_j^{\alpha-1}}\right)^{-\frac{\alpha-r}{\alpha}}\\ 
\end{align*}
(consistent with \cite{DBLP:conf/soda/AcharyaOST15} for uniform $q$) which finishes the proof.
\end{proof}

\section{Proof of \Cref{thm:lower_bounds}}\label{proof:thm:lower_bounds}

\begin{proof}[Proof of \Cref{thm:lower_bounds}]
We can assume that $\max(C_1,C_2)\geqslant 1$, as otherwise the estimate on $n$ is trivial.
We start with the following lemma (a similar technique
is used in \cite{DBLP:conf/soda/AcharyaOST15}, our exposition is different)
\begin{lemma}\label{lemma:distinguisher}
Suppose that there exists a $(\delta,\epsilon)$-estimator for the Renyi divergence as in \Cref{def:estimator}, 
which uses $n$ samples, where $\epsilon<\frac{1}{2}$. Then the following is true:
 any two distributions $p,p'$ that are
 $\frac{1-2\epsilon}{n}$-close in total variation, must satisfy
 $|D_{\alpha}(p\parallel q)-D_{\alpha}(p'\parallel q)| < 2\delta$.
\end{lemma}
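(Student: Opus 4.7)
The plan is to prove the contrapositive of the lemma via a standard hypothesis-testing reduction. I will assume the conclusion fails, i.e.\ $|D_\alpha(p\parallel q) - D_\alpha(p'\parallel q)| \geqslant 2\delta$, and derive a contradiction with the assumed total variation closeness of $p$ and $p'$.

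First I would turn the estimator into a binary distinguisher between $p$ and $p'$. Without loss of generality $D_\alpha(p\parallel q) < D_\alpha(p'\parallel q)$. Let $A \subseteq \mathcal{A}^n$ be the set of sample sequences on which the estimator's output lies in the closed interval $[D_\alpha(p\parallel q) - \delta,\, D_\alpha(p\parallel q) + \delta]$. By the assumed gap, this interval is disjoint from the analogous interval centered at $D_\alpha(p'\parallel q)$. Applying the defining guarantee of the estimator in \Cref{def:estimator} to samples drawn from $p$ and then from $p'$ yields
\begin{equation*}
\Pr_{x\sim p^{\otimes n}}[x \in A] > 1 - \epsilon \qquad\text{and}\qquad \Pr_{x\sim p'^{\otimes n}}[x \in A] < \epsilon,
\end{equation*}
since under $p'$ the output must, with probability exceeding $1-\epsilon$, land in the disjoint interval around $D_\alpha(p'\parallel q)$. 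Testing $A$ thus separates the two product distributions in total variation:
\begin{equation*}
d_{TV}(p^{\otimes n}, p'^{\otimes n}) \;\geqslant\; \Pr_{p^{\otimes n}}[A] - \Pr_{p'^{\otimes n}}[A] \;>\; 1 - 2\epsilon.
\end{equation*}

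Second, I would invoke the standard subadditivity of total variation for independent product distributions,
\begin{equation*}
d_{TV}(p^{\otimes n}, p'^{\otimes n}) \;\leqslant\; n \cdot d_{TV}(p, p'),
\end{equation*}
which follows from a telescoping hybrid argument over the $n$ i.i.d.\ coordinates and a triangle inequality. Combining this with the lower bound from the previous step gives $n\cdot d_{TV}(p,p') > 1-2\epsilon$, i.e.\ $d_{TV}(p,p') > (1-2\epsilon)/n$, contradicting the hypothesis that $p$ and $p'$ are $\tfrac{1-2\epsilon}{n}$-close in total variation. Therefore the assumed divergence gap cannot hold, and we conclude $|D_\alpha(p\parallel q) - D_\alpha(p'\parallel q)| < 2\delta$.

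The proof is otherwise routine; the main subtlety I expect is careful bookkeeping of strict versus non-strict inequalities when converting ``estimator is within $\delta$ of the true divergence'' into ``distinguisher succeeds'', which is precisely why the strict bound $\Pr[|\mathsf{Est} - D_\alpha| > \delta] < \epsilon$ in \Cref{def:estimator} is the correct formulation to push the contradiction through.
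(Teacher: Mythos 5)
Your proposal is correct and takes essentially the same route as the paper: the paper also converts the estimator into a distinguisher for the product distributions $p^{n}$ versus $p'^{n}$ (thresholding the output at the midpoint $\tfrac{1}{2}(D_{\alpha}(p\parallel q)+D_{\alpha}(p'\parallel q))$, which is equivalent to your acceptance set $A$), obtains total-variation advantage $1-2\epsilon$, and then applies the same hybrid/subadditivity bound $d_{TV}(p^{n},p'^{n})\leqslant n\, d_{TV}(p,p')$ to conclude.
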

\begin{proof}[Proof of \Cref{lemma:distinguisher}]
The lemma follows by the following observation: if the estimator fails with probability at most $\epsilon$ on both distributions
$p$ and $p'$, then one can build a distinguisher for an $n$-fold products $p^{n}$ and $p'^{n}$ by comparing the 
algorithm outputs against the threshold $\frac{1}{2}\left(D_{\alpha}(p,q)+D_{\alpha}(p',q)\right)$.
If $D_{\alpha}(p,q)-D_{\alpha}(p',q) \geqslant 2\delta$, this distinguisher works with advantage 
$1-2\epsilon$ in total variation. 
We complete the proof by the standard hybrid argument: if $n$-fold products $p^{n}$ and $p'^{n}$ are away by $1-2\epsilon$
in total variation, then the distributions $p$ and $p'$ must be $\frac{1-2\epsilon}{n}$ away.
\end{proof}

By combining this with \Cref{lemma:div_to_moments}, it suffices to prove 
that $M_{\alpha}(p',q) \geqslant (1+\Omega(1))M_{\alpha}(p,q)$ for some $p,p'$ that are 
close in total variation by $\frac{O(1)}{\max(\sqrt{C_2},C_1)}$.

Recall that $
 M_{\alpha}(p,q) = \sum_{i}{p_i^{\alpha}}{q_i^{1-\alpha}}
$. Consider any vector $\delta$ such that $\delta_i \geqslant -p_i$ and $\sum_{i}\delta_i = 0$ (in particular, $p' = p+\delta$ is a probability distribution).
By the first order Taylor approximation
\begin{align*}
 M(p+\delta,q) & \geqslant  M(p,q) + \alpha\sum_{i}\delta_i{p_i^{\alpha-1}}{q_i^{1-\alpha}}+\\ 
 & \quad + \alpha(\alpha-1)\sum_{i}\delta^2_i{(p_i+\min(0,\delta_i))^{\alpha-2}}{q_i^{1-\alpha}}.
\end{align*}
Assuming that $\delta_i \geqslant -\frac{1}{2}p_i$ we obtain
\begin{align*}
 M(p+\delta,q) & \geqslant  M(p,q) + \alpha\sum_{i}\delta_i{p_i^{\alpha-1}}{q_i^{1-\alpha}}+\\ 
 & \quad + \frac{1}{4}\alpha(\alpha-1)\sum_{i}\delta^2_i{p_i^{\alpha-2}}{q_i^{1-\alpha}}.
\end{align*}
changing variables by $\delta_i = \delta'_i p_i$, denoting $p'_i = p_i+p_i\delta'_i$ and $\delta = \sum_{i}|\delta_i|$
gives us $p'$ and $p$ that are $O(\delta)$ away in total variation and
\begin{align*}
 M(p',q) \geqslant M(p,q)(1+C_1\delta + C_2\delta^2).
\end{align*}
Consider now two cases. Assume first $C_1\geqslant 1$.
The inequality $ M(p',q)  \geqslant M(p,q)(1+C_1\delta)$ implies
an additive error of $\Omega(C_1\delta)$ in estimation.
Note that $\delta$ can be scaled (by a factor smaller than 1, as $C_1 > 1$) so that $C_1\delta = \Omega(1)$. The distance between $p$ and $p'$ is then
at least $O\left(\frac{1}{C_1}\right)$.
Suppose now that $C_2\geqslant 1$. Similarly, by scaling $\delta$ (which is possible because we have $C_2>1$) we can arrive at $C_2\delta^2 = \Omega(1)$.
Then the inequality $M(p',q)\geqslant M(p,q)(1+C_1\delta^2)$ yields an additive error $\Omega(1)$ in estimation, and 
the distance between $p$ and $p'$ is $O\left(\sqrt{\frac{1}{C_2}}\right)$. The bounds on $n$ follow, because by \Cref{lemma:distinguisher} we must have
$\frac{1}{n} < O\left(\frac{1}{C_1}\right)$ or $\frac{1}{n}< O\left(\sqrt{\frac{1}{C_2}}\right)$.
\end{proof}

\end{document}